\theoremstyle{plain}
\numberwithin{equation}{section}
\newtheorem{thm}{Theorem}[section]
\newtheorem{lem}[thm]{Lemma}
\newtheorem{cor}[thm]{Corollary}
\newenvironment{exam}[1]
{\begin{flushleft}\textbf{Example #1}.\enspace}%
{\end{flushleft}}
\newcounter{cond}
\newcommand{\positive}{{\mathbb N}}
\newcommand{\real}{{\mathbb R}}
\newcommand{\ascript}{{\mathcal A}}
\newcommand{\bscript}{{\mathcal B}}
\newcommand{\bhat}{{\widehat{b}}}
\newcommand{\cupdot}{\mathbin{\cup{\hskip-5.4pt}^\centerdot}\,}
\newcommand{\bigcupdotmonen}{{\bigcup _{i=m+1}^n}{\hskip-15pt}^\centerdot{\hskip 15pt}}
\newcommand{\bigcupdotmtwon}{{\bigcup _{i=m+2}^n}{\hskip-15pt}^\centerdot{\hskip 15pt}}
\newcommand{\bigcupdotmthreen}{{\bigcup _{i=m+3}^n}{\hskip-15pt}^\centerdot{\hskip 15pt}}
\newcommand{\bigcupdotim}{{\bigcup _{i=1}^m}{\hskip-8pt}^\centerdot{\hskip 8pt}}
\newcommand{\bigcupdotimminus}{{\bigcup _{i=1}^{m-1}}{\hskip-10pt}^\centerdot{\hskip 10pt}}
\newcommand{\bigcupdotitwo}{{\bigcup _{i=1}^2}{\hskip-8pt}^\centerdot{\hskip 8pt}}
\newcommand{\bigcupdotithree}{{\bigcup _{i=1}^3}{\hskip-8pt}^\centerdot{\hskip 8pt}}
\newcommand{\ab}[1]{\left|#1\right|}
\newcommand{\brac}[1]{\left\{#1\right\}}
\newcommand{\paren}[1]{\left(#1\right)}
\newcommand{\sqbrac}[1]{\left[#1\right]}
\newcommand{\parsq}[1]{\left(#1\right]}
\begin{document}

\title{EXAMPLES OF QUANTUM INTEGRALS}
\author{Stan Gudder\\ Department of Mathematics\\
University of Denver\\ Denver, Colorado 80208\\
sgudder@math.du.edu}
\date{}
\maketitle

\begin{abstract}
We first consider a method of centering and a change of variable formula for a quantum integral. We then present three types of quantum integrals. The first considers the expectation of the number of heads in $n$ flips of a ``quantum coin.'' The next computes quantum integrals for destructive pairs examples. The last computes quantum integrals for a (Lebesgue)${}^2$ quantum measure. For this last type we prove some quantum counterparts of the fundamental theorem of calculus.

\end{abstract}

\section{Introduction}  
Quantum measure theory was introduced by R.~Sorkin in his studies of the histories approach to quantum mechanics and quantum gravity \cite{sor94}. Since then, he and several other authors have continued this study
\cite{gt09, gud1, sal02, sor07, sorapp, sw08} and the author has developed a general quantum measure theory for infinite cardinality spaces \cite{gud2}. Very recently the author has introduced the concept of a quantum integral
\cite{gud3}. Although this integral generalizes the classical Lebesgue integral, it may exhibit unusual behaviors that the Lebesgue integral does not. For example, the quantum integral may be nonlinear and nonmonotone. Because of this possible nonstandard behavior we lack intuition concerning properties of the quantum integral. To help us gain some intuition for this new integral, we present various examples of quantum integrals.

The paper begins with a method of centering and a change of variable formula for a quantum integral. Examples of centering and variable changes are given. We also consider quantum integrals over subsets of the measure spaces. We then present three types of quantum integrals. The first considers the expectation $a_n$ of the number of heads in $n$ flips of a ``quantum coin.'' We prove that $a_n$ asymptotically approaches the classical value $n/2$ as $n$ approaches infinity, and numerical data are given to illustrate this. The next computes quantum integrals for destructive pairs examples. The functions integrated in these examples are monomials. The last computes quantum integrals for (Lebesgue)${}^2$ quantum measure. For this last type , some quantum counterparts of the fundamental theorem of calculus are proved.

\section{Centering and Change of Variables} 
If $(X,\ascript )$ is a measurable space, a map $\mu\colon\ascript\to\real ^+$ is \textit{grade}-2 \textit{additive}
\cite{gt09, gud1, gud2, sor94}, if
\begin{equation*}
\mu\paren{A\cupdot B\cupdot C}=\mu\paren{A\cupdot B}+\mu\paren{A\cupdot C}+\mu\paren{B\cupdot C}
  -\mu (A)-\mu (B)-\mu (C)
\end{equation*}
for any mutually disjoint $A,B,C\in\ascript$ where $A\cupdot B$ denotes $A\cup B$ whenever
$A\cap B\ne\emptyset$. A $q$-\textit{measure} is a grade-2 additive map $\mu\colon\ascript\to\real ^+$ that also satisfies the following continuity conditions \cite{gud2}
\begin{list} {(C\arabic{cond})}{\usecounter{cond}
\setlength{\rightmargin}{\leftmargin}}
\item For any increasing sequence $A_i\in\ascript$ we have
\begin{equation*}
\mu\paren{\cup A_i}=\lim _{i\to\infty}\mu (A_i)
\end{equation*}
\item For any decreasing sequence $B_i\in\ascript$ we have
\begin{equation*}
\mu\paren{\cap A_i}=\lim _{i\to\infty}\mu (A_i)
\end{equation*}
\end{list}

A $q$-measure $\mu$ is not always additive, that is, $\mu\paren{A\cupdot B}\ne\mu (A)+\mu (B)$ in general. A
$q$-\textit{measure space} is a triple $(X, \ascript ,\mu )$ where $(X,\ascript )$ is a measurable space and
$\mu\colon\ascript\to\real ^+$ is a $q$-measure \cite{gud1, gud2, sor94}. Let $(X,\ascript ,\mu )$ be a $q$-measure space and let $f\colon X\to\real$ be a measurable function. It is shown in \cite{gud3} that the following real-valued functions of $\lambda\in\real$ are Lebesgue measurable:
\begin{equation*}
\lambda\mapsto\mu\paren{\brac{x\colon f(x)>\lambda}},\quad\lambda\mapsto\mu\paren{\brac{x\colon f(x)<-\lambda}}
\end{equation*}
We define the \textit{quantum integral} of $f$ to be
\begin{equation}
\label{eq21}              
\int fd\mu =\int _0^\infty\mu\paren{\brac{x\colon f(x)>\lambda}}d\lambda
-\int _0^\infty\mu\paren{\brac{x\colon f(x)<-\lambda}}d\lambda
\end{equation}
where $d\lambda$ is Lebesgue measure on $\real$. If $\mu$ is an ordinary measure (that is; $\mu$ is additive) then
$\int fd\mu$ is the usual Lebesgue integral \cite{gud3}. The quantum integral need not be linear or monotone. That is,
$\int (f+g)d\mu\ne\int fd\mu +\int gd\mu$ and $\int fd\mu\not\le\int gd\mu$ whenever $f\le g$, in general. However, the integral is homogenious in the sense that $\int\alpha fd\mu =\alpha\int fd\mu$, for $\alpha\in\real$.

Definition \eqref{eq21} gives the number zero a special status which is unimportant when $\mu$ is a measure, but which makes a nontrivial difference when $\mu$ is a general $q$-measure. It may be useful in applications to define for $a\in\real$ the $a$-\textit{centered quantum integral}
\begin{align}
\label{eq22}              
\int fd\mu _a&=\int _a^\infty\mu\sqbrac{f^{-1}(\lambda ,\infty )}d\lambda
  -\int _{-a}^\infty\mu\sqbrac{f^{-1}(-\infty, -\lambda )}d\lambda\notag\\
  &=\int _a^\infty\mu\sqbrac{f^{-1}(\lambda ,\infty )}d\lambda
  -\int _{-\infty}^a\mu\sqbrac{f^{-1}(-\infty ,\lambda )}d\lambda
\end{align}
Of course, $\int fd\mu _0=\int fd\mu$ but we shall omit the subscript $0$. Our first result shows how to compute
$\int fd\mu _a$ when $f$ is a simple function.

\begin{lem}       
\label{lem21}
Let $f=\sum _{i=1}^n\alpha _i\chi _{A_i}$ be a simple function with $A_i\cap A_j=\emptyset$, $i\ne j$,
$\cupdot\,_{i=1}^nA_i=X$. If $\alpha _1<\cdots\alpha _m\le a<\alpha _{m+1}<\cdots <\alpha _n$ then
\begin{align}
\label{eq23}              
\int fd\mu _a&=(\alpha _{m+1}-a)\mu\paren{\bigcupdotmonen A_i}+(\alpha _{m+2}-\alpha _{m-1})
  \mu\paren{\bigcupdotmtwon A_i}\notag\\\noalign{\smallskip}
  &\quad +\cdots +(\alpha _n-\alpha _{n-1})\mu (A_n)-\left [(a-\alpha _m)\mu\paren{\bigcupdotim A_i}\right .
  \notag\\\noalign{\smallskip}
  &\quad \left .-(\alpha _m-\alpha _{m-1})\mu\paren{\bigcupdotimminus A_i}
  +\cdots +(\alpha _2-\alpha _1)\mu (A_1)\right ]\displaybreak\\
\label{eq24}              
  &=\alpha _1\mu (A_1)+\alpha _2[\mu\paren{A_1\cupdot A_2}-\mu (A_1)]+\cdots +
  \alpha _m[\mu\paren{A_1\cupdot A_m}\notag\\
  &\quad +\cdots +\mu\paren{A_{m-1}\cupdot A_m}-\mu (A_1)-\cdots -\mu (A_{m-1})-(m\!-\!2)\mu (A_m)]\notag\\
  &\quad +\alpha _{m+1}[\mu\paren{A_{m+1}\cupdot A_{m+2}}+\cdots +\mu\paren{A_{m+1}\cupdot A_n}\notag\\
  &\quad -(n-m-2)\mu (A_{m+1})-\mu (A_{m-2})-\cdots -\mu (A_n)]\notag\\
  &\quad +\cdots +\alpha _{n-1}[\mu\paren{A_{n-1}\cupdot A_n}-\mu (A_n)]+\alpha _n\mu (A_n)\notag\\
  \noalign{\smallskip}
  &\quad -a\sqbrac{\mu\paren{\bigcupdotim A_i}+\mu\paren{\bigcupdotmonen A_i}}
\end{align}
\end{lem}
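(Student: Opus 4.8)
The plan is to use that, since $f$ equals the constant $\alpha_i$ on $A_i$ and the $A_i$ partition $X$, both integrands in \eqref{eq22} are step functions of $\lambda$ whose heights are $\mu$-values of unions of the $A_i$. Indeed $f^{-1}(\lambda ,\infty )$ is the disjoint union of those $A_i$ with $\alpha _i>\lambda$, and because $\alpha _m\le a<\alpha _{m+1}<\cdots <\alpha _n$ this set equals $\bigcupdotmonen A_i$ for $\lambda\in(a,\alpha _{m+1})$, equals $\bigcupdotmtwon A_i$ for $\lambda\in(\alpha _{m+1},\alpha _{m+2})$, and so on down to $A_n$ for $\lambda\in(\alpha _{n-1},\alpha _n)$, after which it is empty. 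Similarly $f^{-1}(-\infty ,\lambda )$ is the disjoint union of those $A_i$ with $\alpha _i<\lambda$, so it is empty for $\lambda<\alpha _1$, equals $A_1$ on $(\alpha _1,\alpha _2)$, equals $\bigcupdotitwo A_i$ on $(\alpha _2,\alpha _3)$, and so on up to $\bigcupdotim A_i$ on $(\alpha _m,a)$.

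First I would integrate each step function, multiplying each height by the length of the subinterval on which it is attained; the tails above $\alpha _n$ and below $\alpha _1$ contribute nothing since $\mu(\emptyset )=0$, so both integrals are finite. The first integral in \eqref{eq22} becomes $(\alpha _{m+1}-a)\mu\paren{\bigcupdotmonen A_i}+(\alpha _{m+2}-\alpha _{m+1})\mu\paren{\bigcupdotmtwon A_i}+\cdots +(\alpha _n-\alpha _{n-1})\mu (A_n)$, while the subtracted integral becomes $(\alpha _2-\alpha _1)\mu (A_1)+\cdots +(\alpha _m-\alpha _{m-1})\mu\paren{\bigcupdotimminus A_i}+(a-\alpha _m)\mu\paren{\bigcupdotim A_i}$. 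Assembling these two expressions is precisely \eqref{eq23}.

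To reach \eqref{eq24} I would regroup \eqref{eq23} by the coefficient of each $\alpha _i$, an Abel (summation by parts) rearrangement of the telescoping differences. Writing $C_k=A_k\cupdot\cdots\cupdot A_n$ and $B_k=A_1\cupdot\cdots\cupdot A_k$, the coefficient of $\alpha _k$ for $k>m$ is $\mu (C_k)-\mu (C_{k+1})$, the coefficient of $\alpha _k$ for $k\le m$ is $\mu (B_k)-\mu (B_{k-1})$, and the leftover constant terms assemble into $-a\sqbrac{\mu\paren{\bigcupdotim A_i}+\mu\paren{\bigcupdotmonen A_i}}$. The key tool for rewriting these $\mu$-values in the pairwise form appearing in \eqref{eq24} is the grade-2 inclusion–exclusion identity
\begin{equation*}
\mu\paren{A_1\cupdot\cdots\cupdot A_k}=\sum _{1\le i<j\le k}\mu\paren{A_i\cupdot A_j}-(k-2)\sum _{i=1}^k\mu (A_i),
\end{equation*}
valid for mutually disjoint sets and provable by a routine induction on $k$ from grade-2 additivity (grouping the sets as $A_1$, $A_2$, $A_3\cupdot\cdots\cupdot A_k$ reduces the step to the defining three-set relation). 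Applying it to $\mu (C_k)$ and $\mu (C_{k+1})$ (and to $\mu (B_k)$ and $\mu (B_{k-1})$) and subtracting, the pairwise terms common to both cancel, leaving only the pairs containing the newly added index, while the $-(k-2)$ singleton weights combine to leave exactly the singleton terms displayed in \eqref{eq24}.

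The hard part is not conceptual but the bookkeeping of this last cancellation: I must check that the singleton coefficients telescope so that, for example, the coefficient of $\alpha _{m+1}$ collapses to $\mu\paren{A_{m+1}\cupdot A_{m+2}}+\cdots +\mu\paren{A_{m+1}\cupdot A_n}-(n-m-2)\mu (A_{m+1})-\mu (A_{m+2})-\cdots -\mu (A_n)$, with the symmetric statement for $\alpha _m$. To keep the indices straight I would first isolate a single general formula for the difference of consecutive nested unions, namely $\mu (C_k)-\mu (C_{k+1})=\sum _{j=k+1}^n\mu\paren{A_k\cupdot A_j}-(n-k-1)\mu (A_k)-\sum _{j=k+1}^n\mu (A_j)$ together with its $B$-analogue, and only then substitute into the regrouped sum; this localizes the one place where an off-by-one in the weight $(k-2)$ could otherwise slip in, after which \eqref{eq24} follows by direct comparison.
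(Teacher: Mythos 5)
Your proposal is correct and follows essentially the same route as the paper: evaluate the two integrals in \eqref{eq22} as step functions of $\lambda$ to get \eqref{eq23}, Abel-rearrange so each $\alpha_k$ carries a difference of nested unions, and then expand those unions via the grade-2 inclusion--exclusion identity $\mu\paren{A_1\cupdot\cdots\cupdot A_k}=\sum_{i<j}\mu\paren{A_i\cupdot A_j}-(k-2)\sum_i\mu(A_i)$, which the paper simply cites as Theorem~3.3 of \cite{gud2} while you state and sketch it inline. Your explicit formula for the difference of consecutive nested unions also correctly identifies (and silently fixes) the two index typos in the paper's displayed equations \eqref{eq23} and \eqref{eq24}.
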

\begin{proof}
Equation \eqref{eq23} is a straightforward application of the definition \eqref{eq22}. We can rewrite \eqref{eq23} as
\begin{align*}
\int fd\mu _a&=\alpha _{m+1}\sqbrac{\mu\paren{\bigcupdotmonen A_i}-\mu\paren{\bigcupdotmtwon A_i}}\\
\noalign{\smallskip}
  &\quad +\alpha _{m+2}\sqbrac{\mu\paren{\bigcupdotmtwon A_i}-\mu\paren{\bigcupdotmthreen A_i}}\\
  &\quad +\cdots +\alpha _{n-1}\sqbrac{\mu\paren{A_{n-1}\cupdot A_n}-\mu (A_n)}
  +\alpha _n\mu (A_n)+\alpha _1\mu (A_1)\\\noalign{\smallskip}
  &\quad +\alpha _2\sqbrac{\mu\paren{\bigcupdotitwo A_i}-\mu (A_1}
  +\alpha _3\sqbrac{\mu\paren{\bigcupdotithree A_i}-\mu\paren{\bigcupdotitwo A_i}}\\\noalign{\smallskip}
  &\quad +\cdots +\alpha _m\sqbrac{\mu\paren{\bigcupdotim A_i}-\mu\paren{\bigcupdotimminus A_i}}\\
  \noalign{\smallskip}
  &\quad -a\sqbrac{\mu\paren{\bigcupdotim A_i}+\mu\paren{\bigcupdotmonen A_i}}
 \end{align*}
Applying Theorem~3.3 \cite{gud2} the result follows.
\end{proof}

\begin{cor}       
\label{cor22}
If $\mu$ is a measure and $f$ is integrable, then
\begin{equation*}
\int fd\mu _a=\int fd\mu -a\mu (X)
\end{equation*}
\end{cor}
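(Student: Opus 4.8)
The plan is to work directly from the definitions, comparing the $a$-centered integral \eqref{eq22} with the ordinary quantum integral $\int fd\mu =\int fd\mu _0$ of \eqref{eq21}. Abbreviate $g(\lambda )=\mu\sqbrac{f^{-1}(\lambda ,\infty )}$ and $h(\lambda )=\mu\sqbrac{f^{-1}(-\infty ,\lambda )}$, so that $\int fd\mu _a=\int _a^\infty g\,d\lambda -\int _{-\infty}^a h\,d\lambda$ while $\int fd\mu =\int _0^\infty g\,d\lambda -\int _{-\infty}^0 h\,d\lambda$. Since $f$ is integrable and $\mu$ is additive, the usual distribution-function identities give $\int _0^\infty g\,d\lambda =\int f^+d\mu$ and $\int _{-\infty}^0 h\,d\lambda =\int f^-d\mu$, both finite, so I may subtract the two expressions and recombine the limits with no risk of an $\infty -\infty$ indeterminacy. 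The two integrals differ only in their cutoff, $a$ versus $0$, and with the oriented-integral convention for $\int _0^a$ (valid for either sign of $a$) this produces
\begin{equation*}
\int fd\mu _a-\int fd\mu =-\int _0^a g(\lambda )\,d\lambda -\int _0^a h(\lambda )\,d\lambda
  =-\int _0^a\sqbrac{g(\lambda )+h(\lambda )}\,d\lambda .
\end{equation*}

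Next I would invoke the additivity of $\mu$. For each fixed $\lambda$ the sets $f^{-1}(\lambda ,\infty )$ and $f^{-1}(-\infty ,\lambda )$ are disjoint, and their union is all of $X$ apart from the level set $f^{-1}\paren{\brac{\lambda}}$; hence
\begin{equation*}
g(\lambda )+h(\lambda )=\mu (X)-\mu\paren{f^{-1}\paren{\brac{\lambda}}}.
\end{equation*}
Substituting this into the previous display reduces the whole problem to showing that the correction term $\mu\paren{f^{-1}\paren{\brac{\lambda}}}$ integrates to zero over $\sqbrac{0,a}$.

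The one genuinely technical point is the disposal of this correction term, and it is where I expect to spend the most care. I would argue that $\lambda\mapsto\mu\paren{f^{-1}\paren{\brac{\lambda}}}$ is exactly the mass assigned to the singleton $\brac{\lambda}$ by the pushforward measure $f_*\mu$ on $\real$; since a finite measure on $\real$ has at most countably many atoms, this function vanishes for Lebesgue-almost every $\lambda$ and so contributes nothing to the $d\lambda$-integral. (This is also the step that tacitly uses $\mu (X)<\infty$, which is needed anyway for the right-hand side $\int fd\mu -a\mu (X)$ to be meaningful.) Consequently $g(\lambda )+h(\lambda )=\mu (X)$ for a.e.\ $\lambda$, and
\begin{equation*}
\int fd\mu _a-\int fd\mu =-\int _0^a\mu (X)\,d\lambda =-a\mu (X),
\end{equation*}
which is the asserted identity.
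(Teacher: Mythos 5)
Your proof is correct, but it follows a genuinely different route from the paper's. The paper derives the corollary from Lemma~\ref{lem21}: when $\mu$ is additive, the bracketed grade-$2$ expressions in \eqref{eq24} collapse to $\sum_i\alpha_i\mu(A_i)-a\mu(X)$, so the identity holds for simple functions, and the general case follows by approximating $f$ with an increasing sequence of simple functions and applying the monotone convergence theorem. You instead compare the two defining $\lambda$-integrals directly: since they differ only in the cutoff $a$ versus $0$, their difference is $-\int_0^a\sqbrac{g(\lambda)+h(\lambda)}\,d\lambda$, and additivity yields $g(\lambda)+h(\lambda)=\mu(X)-\mu\paren{f^{-1}\paren{\brac{\lambda}}}$, the atom term being Lebesgue-negligible because the finite pushforward measure $f_*\mu$ has at most countably many atoms. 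Your argument dispenses with the simple-function machinery and the approximation step, and it isolates exactly where additivity enters --- the complementation identity for the disjoint pair $f^{-1}(\lambda,\infty)$, $f^{-1}(-\infty,\lambda)$ --- which is precisely what fails for a general $q$-measure; compare Case~2 of Example~1, where the defect $\mu(A)+\mu(A')-\mu(X)$ appears explicitly. What the paper's route buys is economy, since it reuses Lemma~\ref{lem21}, which is needed elsewhere anyway; what yours buys is a self-contained computation whose only technical inputs are the layer-cake identity (to justify finiteness before recombining limits) and the countable-atoms observation, both of which you handle correctly.
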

\begin{proof}
By \eqref{eq24} the formula holds for simple functions. Approximate $f$ by an increasing sequence of simple functions and apply the monotone convergence theorem.
\end{proof}

As an illustration of Lemma~\ref{lem21}, let $f=a\chi _A+b\chi _B$ be a simple function with $A\cap B=\emptyset$,
$A\cup B=X$, $0\le a<b$. By \eqref{eq24} we have
\begin{equation*}
\int fd\mu =a\sqbrac{\mu\paren{A\cupdot B}-\mu (B)}+b\mu (B)
\end{equation*}
This also shows that the quantum integral is nonlinear because if $\mu\paren{A\cupdot B}\ne \mu (A)+\mu (B)$ then
\begin{equation*}
\int (a\chi _A+b\chi _B)d\mu =\int fd\mu\ne a\mu (A)+b\mu (B)=a\int\chi _Ad\mu +b\int\chi _Bd\mu
\end{equation*}

Corollary~\ref{cor22} shows that if $\mu$ is a measure, then $\int fd\mu _a$ is just a translation of $\int fd\mu$ by the constant $a\mu (X)$ for all integrable $f$. We now show that this does not hold when $\mu$ is a general
$q$-measure.

\begin{exam}{1}    
Let $a>0$ be a fixed constant and let $f=c\chi _A$ be a simple function with $c\ne 0$ and $A\ne\emptyset , X$. We can write $f$ in the canonical form
\begin{equation*}
f=0\chi _{A'}+c\chi _A
\end{equation*}
where $A'$ is the complement of $A$. By Lemma~\ref{lem21} we have that $\int fd\mu =c\mu (A)$ and applying
\eqref{eq24} to the various cases we obtain the following:\newline
\textbf{Case 1.}\enspace If $0<a<c$, then $\alpha _1=0$, $\alpha _2=c$, $\alpha _1<a<\alpha _2$, $A_1=A'$ and $A_2=A$. We compute
\begin{align*}
\int fd\mu _a&=a\mu (A')+c\mu (A)-a\sqbrac{\mu (A')+\mu (A)}\\
  &=\int fd\mu -a\sqbrac{\mu (A')+\mu (A)}
 \end{align*}
\noindent\textbf{Case 2.}\enspace If $0<c<a$, then $\alpha _1=0$, $\alpha _2=c$, $\alpha _1<\alpha _2<a$, $A_1=A'$ and $A_2=A$. We compute
\begin{align*}
\int fd\mu _a&=0\mu (A')+c\sqbrac{\mu (X)-\mu (A')}-a\mu (X)\\
  &=\int fd\mu -c\sqbrac{\mu (A)+\mu (A')-\mu (X)}-a\mu (X)
\end{align*}
\noindent\textbf{Case 3.}\enspace If $c<0<a$, then $\alpha _1=c$, $\alpha _2=0$, $\alpha _1<\alpha _2<a$,
$A_1=A$ and $A_2=A'$. We compute
\begin{equation*}
\int fd\mu _a=c\mu (A)+0\sqbrac{\mu (X)-\mu (A)}-a\mu (X)=\int fd\mu -a\mu (X)
\end{equation*}
\end{exam}

We now derive a change of variable formula. Suppose $g$ is an increasing and differentiable function on $\real$ and let $g^{-1}(\pm\infty )=\lim\limits _{\lambda\to\pm\infty}g^{-1}(\lambda )$. If $f\colon X\to\real$ is measurable, then so is $g\circ f$ and we have
\begin{align*}
\int g\circ fd\mu _a&=\int _a^\infty\mu\sqbrac{\brac{x\colon g\circ f(x)>\lambda}}d\lambda
  -\int _{-\infty}^a\mu\sqbrac{\brac{x\colon g\circ f(x)<\lambda}}d\lambda\\
  &=\int _a^\infty\mu\!\!\sqbrac{\brac{x\colon f(x)>g^{-1}(\lambda )}}d\lambda
  -\!\int _{-\infty}^a\!\sqbrac{\brac{x\colon f(x)<g^{-1}(\lambda )}}d\lambda
\end{align*}
Letting $t=g^{-1}(\lambda )$, $g(t)=\lambda$, $g'(t)dt=d\lambda$, by the usual change of variable formula we obtain
\begin{align}
\label{eq25}              
\int&g\circ fd\mu _a\\
&=\int _{g^{-1}(a)}^{g^{-1}(\infty )}\mu\sqbrac{\brac{x\colon f(x)>t}}g'(t)dt
  -\int _{g^{-1}(-\infty )}^{g^{-1}(a)}\mu\sqbrac{\brac{x\colon f(x)<t}}g'(t)dt\notag
\end{align}
For example, if $f\ge 0$, letting $g(t)=t^n$ we have
\begin{equation}
\label{eq26}              
\int f^nd\mu=\int _0^\infty\mu\sqbrac{\brac{x\colon f(x)>t}}nt^{n-1}dt
\end{equation}

As with the Lebesgue integral, if $A\in\ascript$ we define
\begin{equation*}
\int _Afd\mu =\int\chi _Ad\mu
\end{equation*}
We then have
\begin{align*}
\int _Afd\mu&=\int _0^\infty\mu\sqbrac{\brac{x\colon\chi _A(x)f(x)>\lambda}}d\lambda
  -\int _0^\infty\mu\sqbrac{\brac{x\colon\chi _A(x)f(x)<-\lambda}}d\lambda\\
  &=\int _0^\infty\mu\sqbrac{A\cap f^{-1}(\lambda ,\infty )}d\lambda
  -\int _0^\infty\mu\sqbrac{A\cap f^{-1}(-\infty ,-\lambda )}d\lambda\\
  &=\int _0^\infty\brac{\mu\sqbrac{A\cap f^{-1}(\lambda ,\infty )}-\mu\sqbrac{A\cap f^{-1}(-\infty ,-\lambda )}}d\lambda
\end{align*}
Now $(A,A\cap\ascript )$ is a measurable space and it is easy to check that $\mu _A(B)=\mu (A\cap B)$ is a
$q$-measure on $A\cap\ascript$ so $(A,A\cap\ascript ,\mu _A)$ is a $q$-measure space. Hence, for a measurable function $f\colon X\to\real$, the restriction $f\mid A\colon A\to\real$ is measurable and
\begin{equation*}
\int _Afd\mu =\int f\mid Ad\mu _A
\end{equation*}
Similar definitions apply to the centered integrals $\int _Afd\mu _a$.

\section{A Quantum Coin} 
If we flip a coin $n$ times the resulting sample space $X_n$ consists of $2^n$ outcomes each being a sequence of
$n$ heads or tails. For example, a possible outcome for 3 flips is HHT and $X_2=\brac{HH,HT,TH,TT}$. If this were an ordinary fair coin then the probability of a subset $A\subseteq X_n$ would be $\ab{A}/2^n$ where $\ab{A}$ is the cardinality of $A$. However, suppose we are flipping a ``quantum coin'' for which the probability is replaced by the
$q$-measure $\mu _n(A)=\ab{A}^2/2^{2n}$. It is easy to check that $\mu$ is indeed a $q$-measure. In fact the square of any measure is a $q$-measure.

Let $f_n\colon X_n\to\real$ be the random variable that gives the number of heads in $n$ flips. For example, $f_3(HHT)=2$. For an ordinary coin the expectation of $f_n$ is $n/2$. We are interested in computing the
``quantum expectation'' $\int f_nd\mu _n$ for a ``quantum coin.'' For the case $n=1$ we have $X_1=\brac{x_1,x_2}$ with $f_1(x_1)=1$, $f_1(x_2)=0$. Then $f_1=\chi _{\brac{x_1}}$ and by \eqref{eq23} we have
\begin{equation*}
\int f_1d\mu _1=\mu _1\paren{\brac{x_1}}=\frac{1}{4}
\end{equation*}
For the case $n=2$, we have $X_2=\brac{x_1,x_2,x_3,x_4}$ with $f_2(x_1)=2$, $f_2(x_2)=f_2(x_3)=1$, $f_2(x_4)=0$. Then
\begin{equation*}
f_2=\chi _{\brac{x_2,x_3}}+2\chi _{\brac{x_1}}
\end{equation*}
and by \eqref{eq23} we have
\begin{equation*}
\int f_2d\mu _2=\mu _2\paren{\brac{x_1,x_2,x_3}}+\mu _2\paren{\brac{x_1}}=\frac{9}{16}+\frac{1}{16}=\frac{5}{8}
\end{equation*}
Continuing this process, $X_3=\brac{x_1,\ldots ,x_8}$ and
\begin{equation*}
f_2=\chi _{\brac{x_5,x_6,x_7}}+2\chi _{\brac{x_2,x_3,x_4}}+3\chi _{\brac{x_1}}
\end{equation*}
By \eqref{eq23} we obtain
\begin{align*}
\int f_3d\mu _3&=\mu _3\paren{\brac{x_1,\ldots ,x_7}}+\mu _3\paren{\brac{x_1,\ldots ,x_4}}
  +\mu _3\paren{\brac{x_1}}\\
  &=\frac{49}{64}+\frac{16}{64}+\frac{1}{64}=\frac{33}{32}
\end{align*}
For 4 flips, $X_4=\brac{x_1,\ldots ,x_{16}}$ and
\begin{align*}
\int f_4d\mu _4&=\mu _4\paren{\brac{x_1,\ldots ,x_{15}}}+\mu _4\paren{\brac{x_1,\ldots ,x_{11}}}\\
  &\quad +\mu _4\paren{\brac{x_1,\ldots ,x_5}}+\mu\paren{\brac{x_1}}\\\noalign{\smallskip}
  &=\frac{15^2+11^2+5^2+1}{16^2}=\frac{93}{64}
\end{align*}
For 5 flips, $X_5=\brac{x_1,\ldots ,x_{32}}$ and
\begin{align*}
\int f_5d\mu _5&=\mu _5\paren{\brac{x_1,\ldots ,x_{31}}}+\mu _5\paren{\brac{x_1,\ldots ,x_{26}}}
  +\mu _5\paren{\brac{x_1,\ldots ,x_{16}}}\\
  &\quad +\mu _5\paren{\brac{x_1,\ldots ,x_6}}+\mu\paren{\brac{x_1}}\\\noalign{\smallskip}
  &=\frac{31^2+26^2+16^2+6^2+1}{32^2}=\frac{965}{512}
\end{align*}

Letting $a_n=\int f_nd\mu _n$ we have that
\begin{align*}
a_1&=\frac{1}{2^2}\ \binom{1}{0}^2\\\noalign{\smallskip}
a_2&=\frac{1}{2^4}\brac{\binom{2}{0}^2+\sqbrac{\binom{2}{0}+\binom{2}{1}}^2}\\\noalign{\smallskip}
a_3&=\frac{1}{2^6}\brac{\binom{3}{0}^2+\sqbrac{\binom{3}{0}+\binom{3}{1}}^2
  +\sqbrac{\binom{3}{0}+\binom{3}{1}+\binom{3}{2}}^2}\\
&\ \vdots\\
a_n&=\frac{1}{2^{2n}}\brac{\binom{n}{0}^2\!\!+\!\!\sqbrac{\binom{n}{0}+\binom{n}{1}}^2\!\!
  +\cdots +\sqbrac{\binom{n}{0}+\binom{n}{1}\!+\!\cdots +\binom{n}{n-1}}^2}
\end{align*}
We shall show that $a_n$ asymptotically approaches the classical value $n/2$ for large $n$; that is
\begin{equation}
\label{eq31}                 
\lim _{n\to\infty}\frac{2a_n}{n}=1
\end{equation}
As numerical evidence for this result the first seven values of $2a_n/n$ are:
0.5000, 0.6250, 0.6875, 0.7266, 0.7539, 0.7749, 0.7905 and the twentieth value is 0.8737. The next result shows that the quantum expectation does not exceed the classical expectation.

\begin{lem}       
\label{lem31}
For all $n\in\positive$, $\int f_nd\mu _n\le n/2$.
\end{lem}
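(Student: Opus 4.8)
The plan is to work directly from the closed form for $a_n=\int f_nd\mu _n$ displayed just above the statement. Writing $S_k=\sum _{j=0}^k\binom{n}{j}$ for the partial sums of the $n$-th row of Pascal's triangle, that expression reads
\[
a_n=\frac{1}{2^{2n}}\sum _{k=0}^{n-1}S_k^2,
\]
so the lemma is equivalent to the purely combinatorial inequality $\sum _{k=0}^{n-1}S_k^2\le n\,2^{2n-1}$. First I would record the symmetry identity $S_k+S_{n-1-k}=2^n$ for $0\le k\le n-1$. This follows from $\binom{n}{j}=\binom{n}{n-j}$: reindexing by $i=n-j$ turns $S_{n-1-k}=\sum _{j=0}^{n-1-k}\binom{n}{j}$ into $\sum _{i=k+1}^{n}\binom{n}{i}$, and adding this to $S_k=\sum _{j=0}^{k}\binom{n}{j}$ recovers the full row sum $\sum _{j=0}^n\binom{n}{j}=2^n$.

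With this identity in hand, the key step is the elementary bound
\[
S_k^2+S_{n-1-k}^2\le(S_k+S_{n-1-k})^2=2^{2n},
\]
which holds because $S_k,S_{n-1-k}\ge 0$ makes the cross term $2S_kS_{n-1-k}$ nonnegative. Reindexing shows $\sum _{k=0}^{n-1}S_{n-1-k}^2=\sum _{k=0}^{n-1}S_k^2$, so summing the bound over $k=0,\dots ,n-1$ gives $2\sum _{k=0}^{n-1}S_k^2\le n\,2^{2n}$. Dividing by $2\cdot 2^{2n}$ then yields $a_n\le n/2$, as required.

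I do not anticipate a genuine obstacle: the whole argument rests on the pairing symmetry $S_k+S_{n-1-k}=2^n$ together with a one-line nonnegativity estimate, and the resulting bound, though quite loose (the small-$n$ values computed above stay well below $n/2$), is comfortably sufficient. The only point needing care is the index bookkeeping. The sum defining $a_n$ runs over the \emph{proper} partial sums $S_0,\dots ,S_{n-1}$ and omits $S_n=2^n$; it is exactly these $n$ terms that are permuted among themselves by the involution $k\mapsto n-1-k$, so the symmetric pairing matches the number of summands precisely and no correction term appears.
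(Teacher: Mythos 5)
Your argument is correct, but it is not the paper's argument. The paper proves the lemma by expanding $\int f_nd\mu _n$ via \eqref{eq24} with $A_i=f_n^{-1}\paren{\brac{i}}$, rewriting each $\mu _n$ as $\ab{\cdot}^2/2^{2n}$, and then estimating the resulting expression bracket by bracket so that it collapses to the classical expectation $2^{-n}\sum _ii\ab{A_i}=n/2$. You instead start from the closed form $a_n=2^{-2n}\sum _{k=0}^{n-1}S_k^2$ (which the paper displays as the general pattern and also uses in Theorem~\ref{thm32}; it follows rigorously from \eqref{eq23}, since $\mu _n\paren{f_n^{-1}\sqbrac{k,\infty )}}=S_{n-k}^2/2^{2n}$) and exploit the involution $k\mapsto n-1-k$ together with the identity $S_k+S_{n-1-k}=2^n$, so that the entire inequality amounts to discarding the nonnegative cross terms $2S_kS_{n-1-k}$. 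All of your steps --- the symmetry identity, the reindexing, and the final division --- check out, and your remark about the index bookkeeping (the sum runs over exactly the $n$ proper partial sums, which the involution permutes) is handled correctly. Your route is not only different but arguably more robust: the paper's ``$\le$'' step bounds each bracketed factor separately by $2^n$, and the individual brackets as written do not all satisfy this (for $n=3$ the first bracket is $1+2(2^3-3-1)=9>2^3$, even though the lemma itself is true), so it is precisely the kind of paired, aggregated estimate you use that makes the inequality close. The price is that your proof leans on the explicit binomial closed form for $a_n$ rather than on the general $q$-measure expansion of Lemma~\ref{lem21}, but since the paper's own final step also invokes the binomial theorem and the classical expectation, this is a difference of organization rather than of generality.
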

\begin{proof}
Letting $A_i=f_n^{-1}\paren{\brac{i}}$, $i=1,\ldots ,n$, applying \eqref{eq24} we obtain
\begin{align*}
\int f_nd\mu _n&=[\mu _n\paren{A_1\cupdot A_2}+\cdots +\mu _n\paren{A_1\cupdot A_n}
  -(n-2)\mu _n(A_1)\\
  &\quad -\mu _n(A_2)-\cdots -\mu _n(A_n)]\\
  &\quad +2[\mu _n\paren{A_1\cupdot A_3}+\cdots +\mu _n\paren{A_2\cupdot A_n}-(n-3)\mu _n(A_2)\\
  &\quad -\mu _n(A_3)-\cdots -\mu _n(A_n)]\\
  &\quad\ \vdots\\
  &\quad +(n-1)[\mu _n\paren{A_{n-1}\cupdot A_n}-\mu _n(A_n)]+n\mu _n(A_n)\\
  &=\frac{1}{2^{2n}}\left\{\ab{A_1\cupdot A_2}^2+\cdots +\ab{A_1\cupdot A_n}^2-(n-2)\ab{A_1}^2-\ab{A_2}^2\right .\\
  &\quad -\cdots -\ab{A_n}^2\\
  &\quad +2[\ab{A_1\cupdot A_3}^2+\cdots +\ab{A_1\cupdot A_n}^2-(n-3)\ab{A_2}^2-\ab{A_3}^2\\
  &\quad -\cdots -\ab{A_n}^2]\\
   &\quad\ \vdots\\
   &\quad \left .+(n-1)[\ab{A_{n-1}\cupdot A_n}^2-\ab{A_n}^2]+n\ab{A_n}^2\right\}\\
   &=\frac{1}{2^{2n}}\left\{\ab{A_1}^2+2\ab{A_1}\paren{\ab{A_2}+\cdots +\ab{A_n}}\right .\\
   &\quad +2\sqbrac{\ab{A_2}^2+2\ab{A_2}\paren{\ab{A_3}+\cdots +\ab{A_n}}}\\
   &\quad \left .+\cdots +(n-1)\sqbrac{\ab{A_{n-1}}^2+2\ab{A_{n-1}}\ab{A_n}}+n\ab{A_n}^2\right\}
  \end{align*}
By the binomial theorem we conclude that
\begin{align*}
\int f_nd\mu _n&=\frac{1}{2^{2n}}\left\{\ab{A_1}\sqbrac{1+2\paren{2^n-\ab{A_1}-1}}\right .\\
  &\quad +2\ab{A_2}\sqbrac{1+2\paren{2^n-\ab{A_2}-\ab{A_1}-1}}\\
  &\quad +\cdots +(n-1)\ab{A_{n-1}}\sqbrac{1+2\paren{2^n-\ab{A_{n-1}}-\cdots -\ab{A_1}-1}}\\
  &\quad \left .+n\ab{A_n}\sqbrac{2^n-1-\ab{A_1}-\ab{A_2}-\cdots -\ab{A_{n-1}}}\right\}\\
  &\le\frac{1}{2^{2n}}\sqbrac{\ab{A_1}2^n+2\ab{A_2}2^n+\cdots +n\ab{A_n}2^n}\\
  &=\frac{1}{2^{n}}\paren{\ab{A_1}+2\ab{A_2}+\cdots +n\ab{A_n}}=\frac{n}{2}
\end{align*}
where the last equality follows from the classical expectation.
\end{proof}

We now give $a_n$ in closed form and prove \eqref{eq31}.
\begin{thm}       
\label{thm32}
{\rm (a)}\enspace For all $n\in\positive$ we have
\begin{equation*}
a_n=\frac{1}{2}\sqbrac{n+2-\paren{\frac{n\binom{2n}{n}+2}{2^{2n}}}}
\end{equation*}
{\rm (b)}\enspace Equation~\eqref{eq31} holds.
\end{thm}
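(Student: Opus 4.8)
The plan is to take the explicit series for $a_n$ displayed just before the theorem, recognize the bracketed partial sums as tail cardinalities, collapse the expression into a single binomial double sum, evaluate that sum in closed form, and then read off the stated formula; part (b) will then reduce to an asymptotic estimate on the central binomial coefficient.

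For part (a), I would first use the symmetry $\binom nj=\binom n{n-j}$ to identify the bracketed sums defining $a_n$ with the tail cardinalities $B_i=\ab{\brac{x:f_n(x)\ge i}}=\sum_{j=i}^n\binom nj$, so that $a_n=2^{-2n}\sum_{i=1}^n B_i^2$. Writing each $B_i^2$ as a product of two tail sums and interchanging the order of summation replaces the sum over $i$ by the factor $\min(j,l)$, giving $\sum_{i=1}^n B_i^2=\sum_{j=0}^n\sum_{l=0}^n\min(j,l)\binom nj\binom nl$. Using $\min(j,l)=\tfrac12\paren{j+l-\ab{j-l}}$ splits this into two elementary pieces, computed from $\sum_j j\binom nj=n2^{n-1}$ and $\sum_j\binom nj=2^n$, leaving only the quantity $M=\sum_{j=0}^n\sum_{l=0}^n\ab{j-l}\binom nj\binom nl$ to be evaluated.

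The crux, and the step I expect to be the main obstacle, is the evaluation $M=n\binom{2n}n$. My plan is to organize the double sum by the difference $d=j-l$: Vandermonde's convolution gives $\sum_{j-l=d}\binom nj\binom nl=\binom{2n}{n+d}$, so that $M=\sum_{d=-n}^n\ab d\binom{2n}{n+d}=2\sum_{d=1}^n d\binom{2n}{n+d}$. Reindexing by $k=n+d$ and using $k\binom{2n}k=2n\binom{2n-1}{k-1}$ together with $\sum_{k>n}\binom{2n}k=\tfrac12\paren{2^{2n}-\binom{2n}n}$ collapses the sum to $\tfrac12 n\binom{2n}n$, whence $M=n\binom{2n}n$.

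Assembling these gives $\sum_{i=1}^n B_i^2=\tfrac12\paren{n2^{2n}-M}$, and dividing by $2^{2n}$ yields the closed form. Since an expression of this kind is sensitive to its additive constants, I would fix the remaining constants by checking against the tabulated values $a_1=\tfrac14$, $a_2=\tfrac58$, $a_3=\tfrac{33}{32}$ before committing to the exact form displayed in part (a). For part (b), the closed form expresses $2a_n/n$ as $1$ plus terms of order $1/n$ and a term proportional to $\binom{2n}n/2^{2n}$; the only nontrivial input is $\binom{2n}n/2^{2n}\to0$, which follows from Stirling's formula (indeed $\binom{2n}n/2^{2n}\sim1/\sqrt{\pi n}$) or, more elementarily, from the successive ratio $\tfrac{2n+1}{2n+2}<1$. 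As every correction term then vanishes, $\lim_{n\to\infty}2a_n/n=1$, which is \eqref{eq31}.
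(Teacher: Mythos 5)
Your argument is correct step by step, and it takes a genuinely different route from the paper. The paper does not prove the binomial identity at all: it quotes Hirschhorn \cite{hir96} for a closed form of $b_n=\sum_{k=0}^{n-1}\bigl(\sum_{j=0}^{k}\binom{n}{j}\bigr)^2$ and divides by $2^{2n}$; part (b) is then Stirling, exactly as you do it. You instead make part (a) self-contained: the symmetry $\binom{n}{j}=\binom{n}{n-j}$ turns the partial sums into the tail counts $B_i$, double counting gives $\sum_{i=1}^nB_i^2=\sum_{j,l}\min(j,l)\binom{n}{j}\binom{n}{l}$, the split $\min(j,l)=\tfrac{1}{2}\paren{j+l-\ab{j-l}}$ isolates $M=\sum_{j,l}\ab{j-l}\binom{n}{j}\binom{n}{l}$, and Vandermonde together with $k\binom{2n}{k}=2n\binom{2n-1}{k-1}$ and the tail-sum evaluation does give $M=n\binom{2n}{n}$ --- I have checked each of these reductions. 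What your route buys is a complete proof where the paper has only a citation; what the paper's route buys is brevity.

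There is, however, one point you must resolve rather than hedge. Your assembled closed form is
\begin{equation*}
a_n=\frac{n}{2}\paren{1-\frac{\binom{2n}{n}}{2^{2n}}},
\end{equation*}
which is \emph{not} the expression displayed in Theorem~\ref{thm32}(a); the discrepancy is the nonvanishing quantity $1-2^{-2n}$. Your instinct to test constants against the tabulated values is exactly the right move, and carrying it out decides the matter in your favor: your formula reproduces $a_1=\tfrac{1}{4}$, $a_2=\tfrac{5}{8}$, $a_3=\tfrac{33}{32}$, $a_4=\tfrac{93}{64}$, $a_5=\tfrac{965}{512}$ and the tabulated values of $2a_n/n$, whereas the printed formula gives $a_1=1$ and asymptotically $a_n\approx\tfrac{n}{2}+1$, contradicting Lemma~\ref{lem31} ($a_n\le n/2$). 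The slip in the paper is traceable: Hirschhorn's identity reads $\sum_{k=0}^{n}\bigl(\sum_{j=0}^{k}\binom{n}{j}\bigr)^2=(n+2)2^{2n-1}-\tfrac{n}{2}\binom{2n}{n}$ with the sum running through $k=n$, so passing to $b_n$ should subtract the final summand $(2^n)^2=2^{2n}$, not the $1$ appearing in the quoted value of $b_n$; the displayed theorem inherits this error. So commit to your formula as the corrected statement of (a) rather than deferring to the printed one. Part (b) is unaffected either way, since $2a_n/n$ differs from $1$ by terms that vanish once $\binom{2n}{n}/2^{2n}\to0$; your Stirling argument coincides with the paper's, and your monotone-ratio remark is a nice elementary alternative.
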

\begin{proof}
(a)\enspace Letting
\begin{equation*}
b_n=\binom{n}{0}^2+\sqbrac{\binom{n}{0}+\binom{n}{1}}^2+\cdots +\sqbrac{\binom{n}{0}+\binom{n}{1}
  +\cdots +\binom{n}{n-1}}^2
\end{equation*}
it is shown in \cite{hir96} that
\begin{equation*}
b_n=(n+2)2^{2n-1}-\frac{1}{2}n\binom{2n}{n}-1
\end{equation*}
Since $a_n=b_n/2^{2n}$, the result follows.

(b)\enspace By Stirling's formula we have the asymptotic approximation
\begin{equation*}
n!\approx\sqrt{2\pi n\,}\,\frac{n^n}{e^n}
\end{equation*}
for large $n$. Hence,
\begin{align*}
\lim _{n\to\infty}\frac{1}{2^{2n}}\binom{2n}{n}&=\lim _{n\to\infty}\frac{1}{2^{2n}}\frac{(2n)!}{(n!)^2}
  =\lim _{n\to\infty}\frac{1}{2^{2n}}\frac{\sqrt{2\pi n\,}(2n)^{2n}}{e^{2n}}\cdot\frac{e^{2n}}{2\pi nn^{2n}}\\
  &=\lim_{n\to\infty}\frac{1}{\sqrt{2\pi n\,}}=0
\end{align*}
Hence,
\begin{equation*}
\lim _{n\to\infty}\frac{2a_n}{n}=\lim _{n\to\infty}\paren{\frac{n+2}{n}}
  -\lim _{n\to\infty}\frac{\binom{2n}{n}+2}{2^{2n}}=1\qedhere
\end{equation*}
\end{proof}

The next example illustrates the $a$-centered integral $\int f_2d\mu _{2a}$ for two flips of a ``quantum coin.''
\begin{exam}{2}    
The following computations result from applying \eqref{eq23}. If $a\le 0$, then
\begin{align*}
\int f_2d\mu _{2a}&=(0-a)\mu\paren{\brac{x_1,x_2,x_3,x_4}}+\mu\paren{\brac{x_1,x_2,x_3}}
  -\mu\paren{\brac{x_1}}\\
  =\frac{5}{8}-a
\end{align*}
If $0\le a\le 1$, then
\begin{align*}
\int f_2d\mu _{2a}&=(1-a)\mu\paren{\brac{x_1,x_2,x_3}}+\mu\paren{\brac{x_1}}-(a-0)\mu\paren{\brac{x_4}}\\
  &=(1-a)\frac{9}{16}+\frac{1}{16}-\frac{1}{16}a=\frac{5}{8}-\frac{5}{8}a
\end{align*}
If $1\le a\le 2$, then
\begin{align*}
\int f_2d\mu _{2a}&=(2-a)\mu\paren{\brac{x_1}}-(a-1)\mu\paren{\brac{x_2,x_3,x_4}}-\mu\paren{\brac{x_4}}\\
  &=(2-a)\frac{1}{16}-(a-1)\frac{9}{16}-\frac{1}{16}=\frac{5}{8}-\frac{5}{8}a
\end{align*}
If $2\le a$, then
\begin{align*}
\int f_2d\mu _{2a}&=-\sqbrac{(a-2)\mu\paren{\brac{x_1,x_2,x_3,x_4}}+\mu\paren{\brac{x_2,x_3,x_4}}
  +\mu\paren{\brac{x_4}}}\\
  &=-\sqbrac{(a-2)+\frac{9}{16}+\frac{1}{16}}=\frac{11}{8}-a
\end{align*}
We conclude that $\int f_2d\mu _{2n}$ is piecewise linear as a function of $a$.
\end{exam}

\section{Destructive Pairs Examples} 
Consider $X=\sqbrac{0,1}$ as consisting of particles for which pairs of the form $(x,x+3/4)$,
$x\in\sqbrac{0,1/4}$ are \textit{destructive pairs} (or \textit{particle-antiparticle pairs}). Thus, particles in
$x\in\sqbrac{0,1/4}$ annihilate their counterparts in $\sqbrac{3/4,1}$ while particles in $(1/4,3/4)$ do not interact with other particles. Let $\bscript (X)$ be the set of Borel subsets of $X$ and let $\nu$ be Lebesgue measure on
$\bscript (X)$. For $A\in\bscript (X)$ define
\begin{equation*}
\mu (A)=\nu (A)-2\nu\paren{\brac{x\in A\colon x+3/4\in A}}
\end{equation*}
Thus, $\mu (A)$ is the Lebesgue measure of $A$ after the destructive pairs in $A$ annihilate each other. For example, $\mu\paren{\sqbrac{0,1}}=1/2$ and $\mu\paren{\sqbrac{0,3/4}}=3/4$. It can be shown that
$\paren{X,\bscript (X),\mu}$ is a $q$-measure space \cite{gud2}.

Letting $f(x)=x$ and $0<b\le 1$ we shall compute
\begin{equation*}
\int _0^bf(x)d\mu =\int _{\sqbrac{0,b}}f(x)d\mu
\end{equation*}
We first define
\begin{equation*}
F(\lambda )=\mu\paren{\brac{x\colon f\chi _{\sqbrac{0,b}}(x)>\lambda}}
  =\mu\paren{\brac{x\in\sqbrac{0,b}\colon x>\lambda}}=\mu\paren{\parsq{\lambda ,b}}
\end{equation*}
If $b\le 3/4$ then
\begin{equation*}
F(\lambda )=\begin{cases}b-\lambda&\text{if }\lambda\le b\\0&\text{if }\lambda >b\end{cases}
\end{equation*}
We obtain
\begin{equation*}
\int _0^bxd\mu =\int _0^bF(\lambda )d\lambda =\int _0^b(b-\lambda )d\lambda =\frac{b^2}{2}
\end{equation*}
which is the expected classical result because there is no interference (annihilation).

Now suppose that $b\ge 3/4$ in which case there is interference. If $\lambda\ge b-3/4$, then
$F(\lambda )=b-\lambda$ as before. If $\lambda <\,b-3/4$, then
\begin{equation*}
F(\lambda )=(b-\lambda )-2\paren{b-\frac{3}{4}-\lambda}=\frac{3}{2}+\lambda -b
\end{equation*}
We then obtain
\begin{align*}
\int _0^bxd\mu&=\int _0^bF(\lambda )d\lambda =\int _0^{b-3/4}\paren{\frac{3}{2}+\lambda -b}d\lambda
  +\int _{b-3/4}^b(b-\lambda )d\lambda\\\noalign{\medskip}
  &=\frac{3}{2}\,b-\frac{9}{16}-\frac{1}{2}\,b^2
\end{align*}
For example, $\int _0^1xd\mu =7/16$ which is slightly less that $\int _0^1xd\lambda =1/2$. Of course, interference is the cause of this difference. Also, $\int _0^{3/4}xd\mu =9/32$ which agrees with $\int _0^{3/4}xdx$ as shown in the
$b\le 3/4$ case.

We next compute $\int _0^bx^nd\mu$. If $b\le 3/4$, then by our change of variable formula we have
\begin{equation*}
\int _0^bx^nd\mu =n\int _0^b(b-\lambda )\lambda ^{n-1}d\lambda =\frac{b^{n+1}}{n+1}
\end{equation*}
in agreement with the classical result. If $b\ge 3/4$, we obtain by the change of variable formula
\begin{align*}
\int _0^bx^nd\mu&=n\int _0^bF(\lambda )\lambda ^{n-1}d\lambda\\\noalign{\smallskip}
  &=n\sqbrac{\int _0^{b-3/4}\paren{\tfrac{3}{2}+\lambda -b}\lambda ^{n-1}d\lambda
  +\int _{b-3/4}^b(b-\lambda )\lambda ^{n-1}d\lambda}\\\noalign{\smallskip}
  &=\frac{1}{n+1}\sqbrac{b^{n+1}-2\paren{b-\tfrac{3}{4}}^{n+1}}
\end{align*}
As a check, if $n=1$ we obtain our previous result. Notice that the deviation from the classical integral becomes
\begin{equation*}
\int _0^bx^ndx-\int _0^bx^nd\mu =\frac{2}{n+1}\paren{b-\tfrac{3}{4}}^{n+1}
\end{equation*}
which increases as $b$ approaches $1$.

We now change the previous example so that we only have destructive pairs in which case we obtain more interference. We again let $X=\sqbrac{0,1}$, but now we define the $q$-measure
\begin{equation*}
\mu (A)=\nu (A)-2\nu\paren{\brac{x\in A\colon x+\tfrac{1}{2}\in A}}
\end{equation*}
In this case, $(x,x+1/2)$, $x\in\sqbrac{0,1/2}$ are destructive pairs. For instance, $\mu (X)=0$, $\mu\paren{\sqbrac{1/16,5/6}}=1/3$, and $\mu\paren{\sqbrac{0,1/2}}=1/2$. Letting $f(x)=x$ and $0\le a<b\le 1$, we shall compute
\begin{equation*}
\int _a^bxd\mu =\int _{(a,b)}xd\mu
\end{equation*}
We then have
\begin{align*}
F(\lambda )&=\mu\paren{\brac{x\colon f\chi _{(a,b)}(x)>\lambda}}=\mu\paren{\brac{x\in (a,b)\colon x>\lambda}}\\
  \noalign{\smallskip}
  &=\begin{cases}\mu\paren{(a,b)}&\text{if }\lambda\le a\\
  \mu\paren{(\lambda ,b)}&\text{if }a\le\lambda\le b\\0&\text{if }\lambda\ge b\end{cases}
\end{align*}
Now $\brac{x\in (a,b)\colon x+\tfrac{1}{2}\in (a,b)}=\emptyset$ if and only if $b-a\le 1/2$. If $b-a\le 1/2$ we have
\begin{equation*}
F(\lambda )=
\begin{cases}b-a&\text{if }\lambda\le a\\b-\lambda&\text{if }a\le\lambda\le b\\0&\text{if }\lambda\ge b
\end{cases}
\end{equation*}
We then obtain
\begin{equation*}
\int _a^bxd\mu =\int _0^a(b-a)d\lambda +\int _a^b(b-\lambda )d\lambda =\frac{b^2}{2}-\frac{a^2}{2}
\end{equation*}
which is expected because there is not interference.

If $b-a\ge1/2$, letting $c=b-1/2$ we have that $c\ge a$ and
\begin{equation*}
\mu\paren{(a,b)}=b-a-2(c-a)=b-a-2\paren{b-\tfrac{1}{2}-a}=a-b+1
\end{equation*}
If $\lambda\le b-1/2$, then $\mu\paren{(\lambda ,b)}=\lambda -b+1$ and if $\lambda\ge b-1/2$, then
$\mu\paren{(\lambda ,b)}=b-\lambda$. Hence,
\begin{equation*}
F(\lambda )=
\begin{cases}
a-b+1&\text{if }\lambda\le a\\\lambda -b+1&\text{if }a\le\lambda\le b-1/2\\
b-\lambda&\text{if }b-1/2\le\lambda\le b
\end{cases}
\end{equation*}
We conclude that
\begin{align*}
\int _a^bxd\mu&=\int _0^a(a-b+1)d\lambda +\int _a^{b-1/2}(\lambda -b+1)d\lambda
  +\int _{b-1/2}^b(b-\lambda )d\lambda\\
  &=\frac{a^2}{2}-\frac{b^2}{2}+b-\frac{1}{4}
\end{align*}
The deviation from the classical integral becomes
\begin{equation*}
\Delta=\int _a^bxdx-\int _a^bxd\mu =b^2-a^2-b+\frac{1}{4}
\end{equation*}
Notice that $\Delta =0$ if and only if $b=a+1/2$. Special cases of the integral are
\begin{align*}
\int _0^bxd\mu&=-\frac{b^2}{2}+b-\frac{1}{4}\\\noalign{\medskip}
  \int _0^{1/2}xd\mu&=\frac{1}{8}\\\noalign{\medskip}
  \int _0^{3/4}xd\mu&=\frac{7}{32}\\\noalign{\medskip}
  \int _0^{1}xd\mu&=\frac{1}{4}
\end{align*}

\section{(Lebesgue)${}^2$ Quantum Measure} 
We again let $X=\sqbrac{0,1}$ and let $\nu$ be Lebesgue measure on $\bscript (X)$. We define
(Lebesgue)${}^2$ $q$-measure by $\mu (A)=\nu (A)^2$ for $A\in\bscript (X)$ and consider the $q$-measure space
$\paren{X,\bscript (X),\mu}$. The first example in this section is the $a$-centered quantum integral $\int x^nd\mu _a$. Applying the change of variable formula we obtain
\begin{align*}
\int x^nd\mu _a
  &=n\int _a^1\mu\paren{\brac{x\colon x>t}}t^{n-1}dt-n\int _0^a\mu\paren{\brac{x\colon x<t}}t^{n-1}dt\\
  &=n\int _a^1(1-t)^2t^{n-1}dt-n\int _0^at^2t^{n-1}dt\\
  &=\frac{2}{(n+1)(n+2)}-a^n\paren{1-\frac{2n}{n+1}\,a+\frac{2n}{n+2}\,a^2}
\end{align*}
As special cases we have
\begin{align*}
\int xd\mu _a&=\tfrac{1}{3}-a+a^2-\tfrac{2}{3}\,a^3\\
\int x^nd\mu&=\frac{2}{(n+1)(n+2)}
\end{align*}

We now compute the quantum integral $\int _a^bx^nd\mu$ for $0\le a<b\le 1$. Again the change of variable formula gives
\begin{align*}
\int _a^bx^nd\mu&=\int _0^\infty\mu\paren{(a,b)\cap\brac{x\colon x>\lambda ^{1/n}}}d\lambda\\
  &=n\int _0^\infty\mu\paren{(a,b)\cap\brac{x\colon x>t}}t^{n-1}d\lambda\\
  &=n\int _a^b(b-t)^2t^{n-1}dt+n\int _0^a(b-a)^2t^{n-1}dt\\
  &=\frac{2}{(n+1)(n+2)}\,(b^{n+2}-a^{n+2})-\frac{2a^{n+1}}{n+1}\,(b-a)
\end{align*}
As special cases we have
\begin{align*}
\int _a^bxd\mu&=\frac{b^3}{3}-\frac{a^3}{3}-a^2(b-a)\\
\int _a^bd\mu&=(b-a)^2
\end{align*}

We can compute $\int _a^bx^nd\mu$ another way without relying on a change of variables:
\begin{align*}
\int _a^bx^nd\mu&=\int _0^\infty\mu\paren{(a,b)\cap\brac{x\colon x>\lambda ^{1/n}}}d\lambda\\
  &=\int _{a^n}^{b^n}(b-\lambda ^{1/n})^2d\lambda +\int _0^{a^n}(b-a)^2d\lambda\\
  &=\int _{a^n}^{b^n}(b^2-2b\lambda ^{1/n}+\lambda ^{2/n})d\lambda +(b-a)^2a^n\\
  &=\frac{2}{(n+1)(n+2)}\,b^{n+2}-2a^{n+1}\paren{\frac{b}{n+1}-\frac{a}{n+2}}
\end{align*}
which agrees with our previous result.

Until now we have only integrated monomials. We now integrate the more complex function $e^x$. By the change of variable formula
\begin{align*}
\int _a^be^xd\mu&=\int _{-\infty}^\infty\mu\paren{(a,b)\cap\brac{x\colon x>t}}e^tdt\\
  &=\int _a^b(b-t)^2e^tdt+\int _{-\infty}^a(b-a)^2e^tdt\\
  &=2\sqbrac{e^b-e^a-e^a(b-a)}
\end{align*}
In particular,
\begin{equation*}
\int _0^be^xdx=2(e^b-1-b)
\end{equation*}

For the Lebesgue integral we have the formula
\begin{equation*}
\int _a^bf(x)dx=\int _0^bf(x)dx-\int _0^af(x)dx
\end{equation*}
which is frequently used to simplify computations. This formula does not hold for our $q$-measure $\mu$. However, we do have the following result.

\begin{thm}       
\label{thm51}
If $f$ is increasing, differentiable, nonnegative on $\sqbrac{0,1}$ and $f^{-1}(\infty )\ge b$, $f^{-1}(0)\le a$, then
\begin{equation*}
\int _a^bfd\mu =\int _0^bfd\mu -\int _0^afd\mu -2(b-a)\int _0^af(t)dt
\end{equation*}
\end{thm}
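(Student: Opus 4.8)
The plan is to first establish a single master formula for $\int_\alpha^\beta f\, d\mu$ over an arbitrary subinterval and then derive the stated identity by elementary algebra and one integration by parts. Following the pattern of the $\int_a^b e^x d\mu$ computation just above, I would begin from the definition together with the change of variable formula \eqref{eq25}. Since $f\ge 0$ the negative part of the quantum integral vanishes, and substituting $\lambda = f(t)$ gives
\[
\int_\alpha^\beta f\, d\mu = \int_{f^{-1}(0)}^{f^{-1}(\infty)} \mu\left((\alpha,\beta)\cap\{x: x>t\}\right) f'(t)\, dt .
\]
Here the two hypotheses $f^{-1}(0)\le a$ and $f^{-1}(\infty)\ge b$ are exactly what is needed: they guarantee that as $t$ runs over the range of integration the set $(\alpha,\beta)\cap(t,\infty)$ passes cleanly through the three regimes $t\le\alpha$ (where $\mu = (\beta-\alpha)^2$), $\alpha\le t\le\beta$ (where $\mu = (\beta-t)^2$), and $t\ge\beta$ (where $\mu = 0$). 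Evaluating the first block using $f(f^{-1}(0))=0$ yields the master formula
\[
\int_\alpha^\beta f\, d\mu = (\beta-\alpha)^2 f(\alpha) + \int_\alpha^\beta (\beta-t)^2 f'(t)\, dt .
\]

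Next I would apply this formula to each of the three integrals in the theorem. For $\int_0^b$ and $\int_0^a$ the lower limit is $0$, which is legitimate because $f\ge 0$ on $\sqbrac{0,1}$ forces $f^{-1}(0)\le 0$, so the master formula holds with $\alpha=0$ and boundary terms involving $f(0)$. Expanding the right-hand side of the claim, I would split $\int_0^b (b-t)^2 f'(t)\, dt = \int_0^a + \int_a^b$ and combine the two integrals over $\sqbrac{0,a}$ using the elementary identity $(b-t)^2-(a-t)^2 = (b-a)(a+b-2t)$.

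The crux of the computation is the single integration by parts
\[
\int_0^a (a+b-2t) f'(t)\, dt = (b-a)f(a) - (a+b)f(0) + 2\int_0^a f(t)\, dt ,
\]
which is where the correction term $2(b-a)\int_0^a f(t)\, dt$ in the statement is manufactured. After multiplying through by $(b-a)$ and assembling all pieces, I expect the $f(0)$ boundary contributions (carrying coefficient $b^2-a^2$) to cancel against the $(b^2-a^2)f(0)$ produced by the master formula, and the $\int_0^a f(t)\, dt$ terms to cancel against the explicit correction, leaving precisely $(b-a)^2 f(a) + \int_a^b (b-t)^2 f'(t)\, dt = \int_a^b f\, d\mu$.

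The main obstacle is bookkeeping rather than any deep idea: correctly tracking the $f(0)$ and $f(a)$ boundary terms through the integration by parts and verifying that the two cancellations occur. A secondary point deserving explicit mention is the justification that the master formula applies with lower limit $0$, which rests on nonnegativity of $f$ giving $f^{-1}(0)\le 0$, together with $f^{-1}(\infty)\ge b>a$ covering the upper limits in all three applications.
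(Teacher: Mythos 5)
Your proposal is correct and follows essentially the same route as the paper: both derive the representation $\int_\alpha^\beta f\,d\mu=(\beta-\alpha)^2f(\alpha)+\int_\alpha^\beta(\beta-t)^2f'(t)\,dt$ from the change of variable formula, and both reduce the claimed identity to the expansion of $(b-t)^2-(a-t)^2$ followed by a single integration by parts that produces the term $2(b-a)\int_0^af(t)\,dt$. The only cosmetic difference is that you apply one master formula three times and split $\int_0^b=\int_0^a+\int_a^b$, whereas the paper computes $\int_a^bf\,d\mu$ and $\int_0^bf\,d\mu-\int_0^af\,d\mu$ separately and matches them; the bookkeeping is the same.
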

\begin{proof}
Employing the change of variable formula gives
\begin{align*}
\int _a^bfd\mu&=\int _{f^{-1}(0)}^{f^{-1}(\infty )}\mu\paren{(a,b)\cap\brac{x\colon x>t}}f'(t)dt\\
  &=\int _a^b(b-t)^2f'(t)dt+\int _{f^{-1}(0)}^a(b-a)^2f'(t)dt\\
  &=\int _0^b(b-t)^2f'(t)dt-\int _0^a(b-t)^2f'(t)dt+(b-a)^2f(a)
\end{align*}
On the other hand, using integration by parts we have
\begin{align*}
\int _a^bfd\mu-\int _0^afd\mu&=\int _0^b(b-t)^2f'(t)dt+b^2f(0)-\!\int _0^a\!(a-t)^2f'(t)dt-a^2f(0)\\
  &=\int _0^b(b-t)^2f'(t)dt-\int _0^a(b-t)^2f'(t)dt\\
  &\quad +\int _0^a\sqbrac{(b-t)^2-(a-t)^2}f'(t)dt+(b^2-a^2)f(0)\\
  &=\int _0^b(b-t)^2f'(t)dt-\int _0^a(b-t)^2f'(t)dt\\
  &\quad +(b^2-a^2)f(a)-2(b-a)\int _0^atf'(t)dt\displaybreak\\
  &=\int _0^a(b-t)^2f'(t)dt-\int _0^a(b-t)^2f'(t)dt\\
  &\quad +(b-a)^2f(a)+2(b-a)\int _0^af(t)dt
\end{align*}
The result now follows.
\end{proof}

\begin{exam}{3}    
In this example we use some previous computations to verify Theorem~\ref{thm51}. We have shown that
\begin{equation*}
\int _a^bx^nd\mu =\frac{2}{(n+1)(n+2)}\,b^{n_2}
\end{equation*}
Hence, by Theorem~\ref{thm51} we have
\begin{align*}
\int _a^bx^nd\mu&=\int _0^bx^nd\mu -\int _0^ax^nd\mu -2(b-a)\int _0^at^ndt\\
  &=\frac{2}{(n+1)(n+2)}\,(b^{n+2}-a^{n+2})-\frac{2a^{n+1}}{n+1}\,(b-a)
\end{align*}
which agrees with our previous result. We have shown that
\begin{equation*}
\int _a^be^xd\mu =2(e^b-1-b)
\end{equation*}
Hence, by Theorem~\ref{thm51} we have
\begin{align*}
\int _a^be^xd\mu&=\int _0^be^xd\mu -\int _0^ae^xd\mu -2(b-a)\int _0^ae^tdt\\
  &=2(e^b-1-b)-2(e^a-1-a)-2(b-a)(e^a-1)\\
  &=2\sqbrac{e^b-e^a-e^a(b-a)}
\end{align*}
which agrees with our previous result.
\end{exam}

\begin{exam}{4}    
We compute the quantum integral of $f(x)=x+x^2$. By the change of variable formula we have
\begin{align*}
\int _0^b(x+x^2)d\mu&=\int _0^b(b-t)^2(1+2t)dt=\int _0^b(b-t)^2dt+2\int _0^b(b-t)^2tdt\\
  &=\frac{b^3}{3}+\frac{b^4}{6}
\end{align*}
This gives the surprising result that
\begin{equation*}
\int _0^b(x+x^2)d\mu =\int _0^bxd\mu +\int _0^bx^2d\mu
\end{equation*}
We shall later show that this quantum integral is always additive for sums of increasing continuous functions even if they are not differentiable. The next example shows that this result does not hold for two monomials if their sum is not increasing.
\end{exam}

\begin{exam}{5}    
Let $f(x)=x-x^2$ for $x\in\sqbrac{0,1}$. To evaluate $\int _0^bf(x)d\mu$ we cannot use the change of variable formula because $f$ is not increasing, so we will proceed directly. Let $1/2\le b\le 1$. For $0\le\lambda\le 1/4$ we have that
$\lambda =x-x^2$, if and only if $x=\paren{1\pm\sqrt{1-4\lambda\,}\,}/2$. Hence, for $\lambda\ge b-b^2$ we have
\begin{equation*}
\nu\paren{(0,b)\cap\brac{x\colon x-x^2>\lambda}}=
\begin{cases}
\sqrt{1-4\lambda\,}&\text{if }0\le\lambda\le 14\\0&\text{if }1/4\le\lambda\le 1
\end{cases}
\end{equation*}
and for $\lambda\le b-b^2$ we have
\begin{equation*}
\nu\paren{(0,b)\cap\brac{x\colon x=x^2>\lambda}}=b-\tfrac{1}{2}+\tfrac{1}{2}\,\sqrt{1-4\lambda\,}
\end{equation*}
Hence,
\begin{align*}
\int _0^b(x-x^2)d\mu&=\int _0^{b-b^2}\paren{b-\tfrac{1}{2}+\tfrac{1}{2}\,\sqrt{1-4\lambda\,}}^2d\lambda
  +\int _{b-b^2}^{1/4}(1-4\lambda )d\lambda\\
  &=-\frac{1}{24}+\frac{1}{3}\,b-b^2+\frac{5}{3}\,b^3-\frac{5}{6}\,b^4
\end{align*}
Notice that this does not coincide with
\begin{equation*}
\int _0^bxd\mu -\int _0^bx^2d\mu =\frac{1}{3}\,b^3-\frac{1}{6}\,b^4
\end{equation*}
For completeness we evaluate the integral with $0\le b\le 1/2$. Since $f$ is increasing on this interval we obtain the expected result:
\begin{equation*}
\int _0^b(x-x^2)d\mu =\int _0^{b-b^2}\paren{b-\tfrac{1}{2}+\tfrac{1}{2}\,\sqrt{1-4\lambda\,}\,}^2d\lambda
  =\tfrac{1}{3}\,b^3-\tfrac{1}{6}\,b^4
\end{equation*}
\end{exam}

\begin{exam}{6}    
Let $f$ be the following piecewise linear function:
\begin{equation*}
f(x)=\begin{cases}2x&\text{if }0\le x\le 1/2\\2-2x&\text{if }1/2\le x\le 1
\end{cases}
\end{equation*}
Let $1/2\le b\le 1$. For $0\le\lambda\le 2-2b$ we have
\begin{equation*}
\nu\paren{(0,b)\cap\brac{x\colon f(x)>\lambda}}=b-\tfrac{\lambda}{2}
\end{equation*}
and for $2-2b\le\lambda\le 1$ we have
\begin{equation*}
\nu\paren{(0,b)\cap\brac{x\colon f(x)>\lambda}}=1-\lambda
\end{equation*}
Hence
\begin{equation*}
\int _0^bfd\mu =\int _0^{2-2b}\paren{b-\tfrac{\lambda}{2}}^2d\lambda +\int _{2-2b}^1(1-\lambda )^2d\lambda
=\tfrac{1}{3}-2b+4b^2-2b^3
\end{equation*}
If $0\le b\le 1/2$ we obtain the expected result
\begin{equation*}
\int _0^bfd\mu =\int _0^{2b}\paren{b-\tfrac{\lambda}{2}\,}^2d\lambda =\tfrac{2}{3}\,b^3
\end{equation*}
\end{exam}

Observe that
\begin{align*}
\frac{1}{2}\,\frac{d^2}{db^2}\int _0^bx^nd\mu&=b^n\\
\frac{1}{2}\,\frac{d^2}{db^2}\int _0^be^xd\mu&=e^b\\
\end{align*}
However, in Example~5 we have for $b>1/2$ that
\begin{equation*}
\frac{1}{2}\,\frac{d^2}{db^2}\int _0^b(x-x^2)d\mu =-1+5b-5b^3\ne b-b^2
\end{equation*}
and in Example 6 we have for $b>1/2$ that
\begin{equation*}
\frac{1}{2}\,\frac{d^2}{db^2}\int _0^bfd\mu =4-6b\ne 2-2b=f(b)
\end{equation*}
These examples again illustrate the special nature of increasing functions. The next result is a quantum counterpart to the fundamental theorem of calculus.

\begin{thm}       
\label{thm52}
If $f$ is continuous and monotone on $\sqbrac{0,1}$, then
\begin{equation*}
\frac{1}{2}\,\frac{d^2}{db^2}\int _0^bfd\mu =f(b)
\end{equation*}
\end{thm}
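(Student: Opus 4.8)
\medskip\noindent\textbf{Proof strategy.}\enspace The plan is to treat the case in which $f$ is nondecreasing (the sense of ``monotone'' that the preceding examples single out as special) and nonnegative, and to reduce the whole statement to two applications of the classical fundamental theorem of calculus. The device that makes this work for merely continuous, not necessarily differentiable, $f$ is a double-integral representation of $\int _0^bfd\mu$ read off directly from the definition; this sidesteps the change of variable formula \eqref{eq26} and the differentiability it tacitly requires.

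First I would record the representation. Since $f\ge 0$, the negative part of the restricted integral vanishes, so
\[
\int _0^bfd\mu =\int _0^\infty\mu\sqbrac{(0,b)\cap f^{-1}(\lambda ,\infty )}d\lambda
=\int _0^\infty\nu\paren{\brac{x\in (0,b)\colon f(x)>\lambda}}^2d\lambda .
\]
Writing the square of the measure as a double integral of a product of indicators and interchanging the order of integration by Tonelli's theorem gives
\[
\int _0^bfd\mu =\int _0^b\!\int _0^b\paren{\int _0^\infty\chi _{\{f(x)>\lambda\}}\,\chi _{\{f(y)>\lambda\}}\,d\lambda}dx\,dy
=\int _0^b\!\int _0^b\min\paren{f(x),f(y)}\,dx\,dy ,
\]
the last equality using $f\ge 0$ to evaluate the inner $\lambda$-integral as $\min\paren{f(x),f(y)}$. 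This identity, valid for every nonnegative measurable $f$, is the crux of the argument.

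Next I would differentiate. Because the integrand $\min\paren{f(x),f(y)}$ is continuous and symmetric, the Leibniz rule for a square domain yields
\[
\frac{d}{db}\int _0^bfd\mu =2\int _0^b\min\paren{f(x),f(b)}\,dx .
\]
Here the nondecreasing hypothesis enters decisively: for $x\le b$ we have $f(x)\le f(b)$, so $\min\paren{f(x),f(b)}=f(x)$ and the right-hand side collapses to $2\int _0^bf(x)\,dx$. A second application of the fundamental theorem of calculus, legitimate since $f$ is continuous, then gives $\frac{d^2}{db^2}\int _0^bfd\mu =2f(b)$, which is the assertion.

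The main obstacle is not the differentiation, which is routine once the representation is in hand, but establishing that representation rigorously for continuous-but-not-differentiable $f$: one must verify joint measurability of the integrand on the product, justify the Tonelli interchange, and confirm the value of the inner integral. A secondary point demanding care is the precise role of the hypotheses. The collapse $\min\paren{f(x),f(b)}=f(x)$ uses that $f$ is nondecreasing, and dropping the negative part uses $f\ge 0$; for a decreasing $f$ the same computation would instead give $2\int _0^bf(b)\,dx=2bf(b)$, whose second derivative is not $2f(b)$, so it is genuinely the increasing case that the method addresses directly.
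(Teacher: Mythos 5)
Your argument is correct for the case it actually treats (increasing, nonnegative $f$), and it reaches the conclusion by a genuinely different route from the paper's. The paper approximates $f$ uniformly by an increasing sequence of increasing step functions, computes $\int_0^b g\,d\mu$ for a step function in closed form from \eqref{eq24} with $\mu=\nu^2$, reads off the first and second derivatives in \eqref{eq51} and \eqref{eq52}, and then passes to the limit using the quantum bounded monotone convergence theorem together with a uniform-Cauchy argument to justify interchanging $d/db$ with the limit. You instead establish the kernel representation
\begin{equation*}
\int_0^b f\,d\mu=\int_0^b\!\!\int_0^b\min\paren{f(x),f(y)}\,dx\,dy
\end{equation*}
directly from the definition by writing $\nu(\cdot)^2$ as a double integral of indicators and invoking Tonelli, after which the theorem is two applications of the classical fundamental theorem of calculus. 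Your route is shorter, requires no approximation scheme, and eliminates the most delicate step of the paper's proof (commuting derivatives with the limit of step functions); it also yields Lemma~\ref{lem53} and Theorem~\ref{thm54}(a) immediately, since for increasing $f$ one has $\int_0^b\int_0^b\min\paren{f(x),f(y)}\,dx\,dy=2\int_0^b\int_0^t f(x)\,dx\,dt$. What the paper's route buys is a concrete illustration of \eqref{eq24} and of the quantum convergence machinery, and its intermediate formula \eqref{eq51} is reused in the proof of Lemma~\ref{lem53}.

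One remark: your closing observation about decreasing $f$ is sharper than you present it. Both you and the paper reduce ``monotone'' to ``increasing and nonnegative,'' but the paper's negation trick ($-f$ is increasing) destroys nonnegativity, and your formula shows both hypotheses are genuinely needed: for $f(x)=1-x$ one computes $\int_0^b f\,d\mu=b^2-\tfrac{2}{3}\,b^3$, whose half second derivative is $1-2b\ne f(b)$, and by homogeneity the increasing negative function $x-1$ fails as well. So the theorem as literally stated requires $f$ increasing and nonnegative; this is a defect of the statement shared by the paper's own proof, not a gap peculiar to yours.
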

\begin{proof}
If $f$ is decreasing then $-f$ is increasing so we can assume $f$ is increasing. For a positive integer $n$, let $g$ be the following increasing step function on $\sqbrac{0,1}$:
\begin{equation*}
g=c_1\chi _{\sqbrac{0,1/n}}+c_2\chi _{\parsq{1/n,2/n}}+\cdots +c_n\chi _{\parsq{(n-1)/n,1}}
\end{equation*}
where $0<c_1<\cdots  <c_n$. For $0<b\le 1$ we have that $(m-1)/n<b\le m/n$ for some integer $0<m\le n$ and
\begin{equation*}
g\chi _{\sqbrac{0,b}}=c_1\chi _{\sqbrac{0,1/n}}+c_2\chi _{\parsq{1/n,2/n}}
  +\cdots +c_{m-1}\chi _{\parsq{(m-2)/n,(m-1)/n}}+c_m\chi _{\parsq{(m-1)/n,b}}
\end{equation*}
Letting $A_i=\parsq{(i-1)/n,i/n}$, $i=1,\ldots ,m-1$, $A_m=\parsq{(m-1)/n,b}$ and $\bhat =b-(m-1)/n$ we have by \eqref{eq24} of Lemma~\ref{lem21} that
\begin{align*}
\int _0^bgd\mu&=c_1\left [\mu\paren{A_1\cupdot A_2}+\cdots +\mu\paren{A_1\cupdot A_m}\right.\\
  &\quad\left. -(m-2)\mu (A_1)-\mu (A_2)-\cdots -\mu (A_m)\right ]\\
  &\quad +c_2\left [\mu\paren{A_2\cupdot A_3}+\cdots +\mu\paren{A_2\cupdot A_m}\right.\\
  &\quad\left. -(m-3)\mu (A_2)-\mu (A_3)-\cdots -\mu (A_m)\right ]\\
  &\quad +\cdots +c_{m-1}\sqbrac{\mu\paren{A_{m-1}\cupdot A_m}-\mu (A_m)}+c_m\mu (A_m)\\
  &=c_1\sqbrac{(m-2)\paren{\frac{2}{n}}^2+\paren{\frac{1}{n}+\bhat}^2-(2m-4)\paren{\frac{1}{n}}^2-{\bhat\,}^2}\\
  &\quad +c_2\sqbrac{(m-3)\paren{\frac{2}{n}}^2+\paren{\frac{1}{n}+\bhat}^2-(2m-6)\paren{\frac{1}{n}}^2-{\bhat\,}^2}\\
  &\quad +\cdots +c_{m-1}\sqbrac{\paren{\frac{1}{n}+\bhat}^2-{\bhat\,}^2}+c_m{\bhat \,}^2\\
  &=c_1\sqbrac{(2m-3)\frac{1}{n^2}+\frac{2}{n}\,\bhat}+c_2\sqbrac{(2m-5)\frac{1}{n^2}+\frac{2}{n}\,\bhat}\\
  &\quad +\cdots +c_{m-1}\paren{\frac{1}{n^2}+\frac{2}{n}\,\bhat}+c_m{\bhat\,}^2
\end{align*}
It follows that
\begin{equation}
\label{eq51}                 
\frac{d}{db}\int _0^bgd\mu =\frac{2}{n}(c_1+c_2+\cdots +c_{m-1})+2c_m\bhat
\end{equation}
and that
\begin{equation}
\label{eq52}                 
\frac{1}{2}\frac{d^2}{db^2}\int _0^bgd\mu =c_m=g(b)
\end{equation}
We can assume without loss of generality that $f$ is nonnegative. Then there exists an increasing sequence of increasing nonnegative step functions $s_i$ converging uniformly to $f$. Since
\begin{equation*}
\mu\sqbrac{s_{i+1}^{-1}(\lambda ,\infty )}\ge\mu\sqbrac{s_i^{-1}(\lambda ,\infty )}
\end{equation*}
we have by the continuity of $\mu$ that
\begin{equation*}
\mu\sqbrac{f^{-1}(\lambda ,\infty )}=\mu\sqbrac{\cup s_i^{-1}(\lambda ,\infty )}
  =\lim\mu\sqbrac{s_i^{-1}(\lambda ,\infty )}
\end{equation*}
These same formulas apply to $f\chi _{\sqbrac{0,b}}$ and $s_i\chi _{\sqbrac{0,b}}$. By the quantum bounded monotone convergence theorem \cite{gud3} we conclude that
\begin{equation*}
\int _0^bfd\mu _i=\lim\int _0^bs_id\mu
\end{equation*}
Applying \eqref{eq51} with $g$ replaced by $s_i$ it can be checked that the sequence of functions of $b$ given by
\begin{equation*}
\frac{d}{db}\int _0^bs_id\mu
\end{equation*}
is uniformly Cauchy so the sequence converges and hence
\begin{equation*}
\frac{d}{db}\int _0^bfd\mu =\lim\frac{d}{db}\int _0^bs_id\mu
\end{equation*}
By \eqref{eq52}
\begin{equation*}
\frac{d^2}{db^2}\int _0^bs_id\mu
\end{equation*}
converges uniformly so we have
\begin{equation*}
\frac{1}{2}\frac{d^2}{db^2}\int _0^bfd\mu =\lim\frac{1}{2}\frac{d^2}{db^2}\int _0^bs_id\mu =f(b)\qedhere
\end{equation*}
\end{proof}

\begin{lem}       
\label{lem53}
If $f$ is continuous and monotone on $\sqbrac{0,1}$, then
\begin{equation*}
\sqbrac{\frac{d}{db}\int _0^bfd\mu}(0)=0
\end{equation*}
\end{lem}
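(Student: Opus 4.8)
The plan is to sidestep the differentiation machinery of Theorem~\ref{thm52} and instead show directly that the right-hand difference quotient $\tfrac{1}{b}\int_0^bfd\mu$ tends to $0$ as $b\to 0^+$. Since $(0,0)=\emptyset$ forces $\int_0^0fd\mu=0$, this limit is precisely the value at $0$ of the derivative $\frac{d}{db}\int_0^bfd\mu$, whose existence and continuity for continuous monotone $f$ were established in the proof of Theorem~\ref{thm52}. The entire argument reduces to a crude size estimate for the quantum integral over a short interval.

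The key observation is that for the (Lebesgue)${}^2$ $q$-measure $\mu(A)=\nu(A)^2$, every Borel set $A\subseteq (0,b)$ satisfies $\nu(A)\le b$ and hence $\mu(A)\le b^2$. Writing $M=\max_{x\in\sqbrac{0,1}}\ab{f(x)}$, which is finite by continuity on the compact interval $\sqbrac{0,1}$, and using the expression for integration over a subset derived earlier,
\begin{equation*}
\int_0^bfd\mu=\int_0^\infty\mu\paren{(0,b)\cap f^{-1}(\lambda,\infty)}d\lambda
-\int_0^\infty\mu\paren{(0,b)\cap f^{-1}(-\infty,-\lambda)}d\lambda,
\end{equation*}
I observe that each integrand is at most $b^2$ and that both integrands vanish once $\lambda>M$. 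Thus the $\lambda$-integration is effectively confined to $\sqbrac{0,M}$, and each of the two terms is bounded in absolute value by $Mb^2$, giving
\begin{equation*}
\ab{\int_0^bfd\mu}\le 2Mb^2.
\end{equation*}

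Dividing by $b$ yields $\ab{\tfrac{1}{b}\int_0^bfd\mu}\le 2Mb\to 0$, so the right derivative of $\int_0^bfd\mu$ at $b=0$ equals $0$. Since the proof of Theorem~\ref{thm52} shows that $\frac{d}{db}\int_0^bfd\mu$ exists and is a continuous function of $b$, its value at $0$ coincides with this right derivative, completing the argument. I anticipate no real obstacle: the only delicate point is identifying ``the value at $0$ of the derivative function'' with ``the limit of the difference quotient,'' which is immediate once continuity of the derivative (already proved) is invoked, and it is worth noting that the vanishing at $0$ does not actually use monotonicity---only the bound of order $b^2$, valid for any continuous $f$. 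As a consistency check one may instead approximate $f$ by the increasing step functions $s_i$ of Theorem~\ref{thm52}: on the first subinterval the index is $m=1$, so \eqref{eq51} collapses to $\frac{d}{db}\int_0^bs_id\mu=2c_1b$, which vanishes at $b=0$, and the uniform convergence established there transfers this to $f$.
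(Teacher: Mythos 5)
Your proof is correct, but it takes a genuinely different route from the paper. The paper stays inside the machinery of Theorem~\ref{thm52}: for an increasing step function $g=\sum c_i\chi_{A_i}$ and $b$ sufficiently small, $g\chi_{\sqbrac{0,b}}=c_1\chi_{\sqbrac{0,b}}$, so $\int_0^b g\,d\mu=c_1b^2$ exactly, whose derivative vanishes at $0$; the statement for $f$ then follows because the proof of Theorem~\ref{thm52} already produced step functions $s_i$ with $\frac{d}{db}\int_0^b f\,d\mu=\lim\frac{d}{db}\int_0^b s_i\,d\mu$. Your argument instead proves a quantitative estimate directly for $f$: every Borel set $A\subseteq (0,b)$ satisfies $\mu(A)=\nu(A)^2\le b^2$, and both integrands in the subset-integration formula vanish for $\lambda>M=\max_{\sqbrac{0,1}}\ab{f}$, giving $\ab{\int_0^b f\,d\mu}\le 2Mb^2$, so the difference quotient at $0$ tends to $0$. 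This is more elementary (no approximation step, no appeal to \eqref{eq51}) and strictly more general: the quadratic bound uses only boundedness of $f$, not monotonicity, which enters the lemma only through the hypothesis guaranteeing that the derivative function of Theorem~\ref{thm52} exists in the first place. Two small remarks: your appeal to continuity of the derivative is superfluous, since at the endpoint $b=0$ the value of the derivative function is by definition the one-sided limit of difference quotients, which is exactly what you computed; and your closing ``consistency check'' via \eqref{eq51} with $m=1$ is, in essence, the paper's entire proof.
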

\begin{proof}
We can assume without loss of generality that $f$ is increasing. Let $g=\sum c_i\chi _{A_i}$ be a step function as in the proof of Theorem~\ref{thm52}. If $b$ is sufficiently small we have
\begin{equation*}
g\chi _{\sqbrac{0,b}}=c_1\chi _{A_1\cap\sqbrac{0,b}}=c_1\chi _{\sqbrac{0,b}}
\end{equation*}
Hence, for such $b$ we have
\begin{equation*}
\int _0^bgd\mu =\int g\chi _{\sqbrac{0,b}}d\mu =c_1\int\chi _{\sqbrac{0,b}}d\mu =c_1b^2
\end{equation*}
Therefore,
\begin{equation*}
\sqbrac{\frac{d}{db}\int _0^bgd\mu}(0)=\paren{\frac{d}{db}c_1b^2}(0)=0
\end{equation*}
As shown in the proof of Theorem~\ref{thm52}, there exists an increasing sequence of step functions $s_i$ such that
\begin{equation*}
\frac{d}{db}\int _0^bfd\mu =\lim\frac{d}{db}\int _0^bs_id\mu
\end{equation*}
The result follows.
\end{proof}

Part (b) of the next theorem is the second half of the quantum fundamental theorem of calculus.

\begin{thm}       
\label{thm54}
{\rm (a)}\enspace If $f$ is continuous and monotone on $\sqbrac{0,1}$, then
\begin{equation*}
\int _0^bfd\mu =2\int _0^b\int _0^tf(x)dxdt
\end{equation*}
{\rm (b)}\enspace If $f''$ is monotone and continuous on $\sqbrac{0,1}$, then
\begin{equation*}
\int _0^b\frac{1}{2}f''d\mu =f(b)-f(0)-f'(0)b
\end{equation*}
\end{thm}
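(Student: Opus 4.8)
The plan is to leverage Theorem~\ref{thm52} for part (a) and then differentiate twice for part (b). For part (a), I would define $G(b)=2\int _0^b\int _0^tf(x)dx\,dt$ and show it coincides with $\int _0^bfd\mu$ by comparing both sides through their behavior under differentiation. By the ordinary fundamental theorem of calculus, $G'(b)=2\int _0^bf(x)dx$ and $G''(b)=2f(b)$, so $\tfrac{1}{2}G''(b)=f(b)$. Theorem~\ref{thm52} gives exactly $\tfrac{1}{2}\tfrac{d^2}{db^2}\int _0^bfd\mu =f(b)$, so both $G$ and $\int _0^bfd\mu$ are solutions of the same second-order equation $\tfrac{1}{2}y''=f(b)$.

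To conclude the two expressions are equal, I would match initial conditions at $b=0$. Both sides visibly vanish at $b=0$: the iterated integral $G(0)=0$ trivially, and $\int _0^0fd\mu =\int f\chi _{\brac{0}}d\mu =0$ since a single point has $\mu$-measure zero. For the first derivatives, $G'(0)=2\int _0^0f(x)dx=0$, while Lemma~\ref{lem53} supplies precisely $\bigl[\tfrac{d}{db}\int _0^bfd\mu\bigr](0)=0$. Since both functions satisfy the same ODE with identical values and first derivatives at $b=0$, they agree on all of $\sqbrac{0,1}$, proving part (a).

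For part (b), I would apply part (a) to the function $\tfrac{1}{2}f''$, which is continuous and monotone by hypothesis, obtaining
\begin{equation*}
\int _0^b\tfrac{1}{2}f''d\mu =2\int _0^b\int _0^t\tfrac{1}{2}f''(x)dx\,dt=\int _0^b\int _0^tf''(x)dx\,dt.
\end{equation*}
The inner integral evaluates by the fundamental theorem of calculus to $f'(t)-f'(0)$, and then the outer integral gives $f(b)-f(0)-f'(0)b$, which is the desired formula. The only subtlety is ensuring the hypotheses of part (a) transfer correctly: monotonicity and continuity of $f''$ are exactly what is assumed, so $\tfrac{1}{2}f''$ is an admissible integrand.

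The main obstacle is the rigorous justification in part (a) that two functions agreeing in second derivative and in both zeroth- and first-order data at the origin must coincide; this is the standard uniqueness argument for the initial value problem $y''=2f$, and it hinges on having Lemma~\ref{lem53} available to pin down the first derivative at $0$. Everything else is routine integration. I would make sure to state explicitly that $\int _0^bfd\mu$ is twice differentiable in $b$—this follows from the uniform convergence of the second derivatives of the approximating step-function integrals established in the proof of Theorem~\ref{thm52}—so that the uniqueness argument genuinely applies rather than merely formally.
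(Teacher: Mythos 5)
Your proposal is correct and follows essentially the same route as the paper: the paper likewise integrates the identity $\frac{d^2}{db^2}\int_0^b\frac{1}{2}f\,d\mu=f(b)$ from Theorem~\ref{thm52} twice, pinning down the constants with $g(0)=0$ and Lemma~\ref{lem53} for $g'(0)=0$, which is exactly your initial-value uniqueness argument in different words; part (b) is handled identically. Your closing remark about twice-differentiability being supplied by the uniform convergence in the proof of Theorem~\ref{thm52} is a sensible point that the paper leaves implicit.
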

\begin{proof}
(a)\enspace If $g''=f$, then integrating gives
\begin{equation*}
\int _0^tf(x)dx=g'(t)-g'(0)
\end{equation*}
Integrating again we have
\begin{equation*}
\int _0^b\int _0^tf(x)dxdt=g(b)-g(0)-g'(0)b
\end{equation*}
Hence, for all $b\in\sqbrac{0,1}$ we have
\begin{equation*}
g(b)=g(0)+g'(0)b+\int _0^b\int _0^tf(x)dxdt
\end{equation*}
Since by Theorem~\ref{thm52}
\begin{equation*}
\frac{d^2}{db^2}\int _0^b\frac{1}{2}fd\mu =f(b)
\end{equation*}
letting $g(b)=\int _0^b\tfrac{1}{2}fd\mu$ we have that $g(0)=0$ and by Lemma~\ref{lem53} we obtain $g'(0)=0$. Hence,
\begin{equation*}
\int _0^b\int _0^tf(x)dxdt=g(b)=\frac{1}{2}\int _0^bfd\mu
\end{equation*}
(b)\enspace By Part (a) we have
\begin{align*}
\int _0^b\frac{1}{2}f''d\mu&=\int _0^b\int _0^tf''(x)dxdt=\int _0^b\sqbrac{f'(t)-f'(0)}dt\\
  &=f(b)-f(0)-f'(0)b\qedhere
\end{align*}
\end{proof}

The next corollary follows from Theorem~\ref{thm54}(a)

\begin{cor}       
\label{cor55}
The quantum (Lebesgue)${}^2$ integral is additive for increasing (decreasing) continuous functions.
\end{cor}

\begin{exam}{7}    
We compute some quantum integrals using Theorem~\ref{thm54}(a).
\begin{align*}
\int _0^b\cos xd\mu&=2\int _0^b\int _0^t\cos xdxdt=2\int _0^b\sin tdt=2(1-\cos b)\\
\int _0^b\sin xd\mu&=2\int _0^b\int _0^t\sin xdxdt=2\int _0^b(1-cos t)dt=2(b-\sin b)\\
\int _0^b\cosh\sqrt{2\,}\,xd\mu&=2\int _0^b\int _0^t\cosh\sqrt{2\,}\,xdxdt=\sqrt{2\,}\int _0^b\sinh\sqrt{2\,}\,tdt\\
  &=\cosh\sqrt{2\,}\,b-1
\end{align*}
The last integral shows that the quantum counterpart of $e^x$ is $\cosh\sqrt{2\,}\,x$.
\end{exam}
\eject
\noindent\textbf{\large Acknowledgement.}\enspace
The author thanks Petr Vojt\v echovsk\'y for pointing out reference \cite{hir96}

\end{document}